\newclass{\TISP}{TISP}
\newcommand{\comment}[1]{}
\newcommand{\calS}{{\cal S}}
\newcommand{\calT}{{\cal T}}
\newcommand{\nll}{{\tt NULL}}
\newcommand{\no}{{\tt NO}}
\newcommand{\yes}{{\tt YES}}
\newlang{\LPGR}{LPGR}
\newlang{\LGGR}{LGGR}
\newlang{\oneLGGR}{1LGGR}
\newlang{\oneoneLGGR}{11LGGR}
\newclass{\ASC}{nSC}
\newclass{\PASC}{PASC}
\newclass{\ePASC}{\PASC_{\epsilon}}
\newclass{\eASC}{\ASC_{\epsilon}}
\newtheorem{theorem}{Theorem}
\newtheorem{lemma}[theorem]{Lemma}
\newtheorem{claim}[theorem]{Claim}
\newtheorem{proposition}[theorem]{Proposition}
\newtheorem{definition}{Definition}
\newcommand{\ignore}[1]{}
\newcommand{\newfontobj}[2]{
  \newcommand{#1}[1]{
    \expandafter\def\csname##1\endcsname{{#2 ##1}}}}
\title{An $O(n^{\epsilon})$ Space and Polynomial Time Algorithm for Reachability in Directed Layered Planar Graphs}
\author{Diptarka Chakraborty\thanks{diptarka@cse.iitk.ac.in}}
\author{Raghunath Tewari \thanks{rtewari@cse.iitk.ac.in}}
\affil{Department of Computer Science and Engineering,\\ Indian
  Institute of Technology Kanpur,\\Kanpur, India} 
\begin{document}

\maketitle

\begin{abstract}
Given a graph $G$ and two vertices $s$ and $t$ in it, {\em graph reachability} is the problem of checking whether there exists a path from $s$ to $t$ in $G$. We show that reachability in directed layered planar graphs can be decided in polynomial time and $O(n^\epsilon)$ space, for any $\epsilon > 0$. The previous best known space bound for this problem with polynomial time was approximately $O(\sqrt{n})$ space \cite{INPVW13}.

Deciding graph reachability in {\SC} is an important open question in complexity theory and in this paper we make progress towards resolving this question.
\end{abstract}

\section{Introduction}

Given a graph and two vertices $s$ and $t$ in it, the problem of determining whether there is a path from $s$ to $t$ in the graph is known as the graph reachability problem. Graph reachability problem is an important question in complexity theory. Particularly in the domain of space bounded computations, the reachability problem in various classes of graphs characterize the complexity of different complexity classes. The reachability problem in directed and undirected graphs, is complete for the classes non-deterministic log-space (\NL) and deterministic log-space (\L) respectively \cite{LP82, Reingold08}. The latter follows due to a famous result by Reingold who showed that undirected reachability is in {\L} \cite{Reingold08}. Various other restrictions of reachability has been studied in the context of understanding the complexity of other space bounded classes (see \cite{RTV06, CRV07, rulcomplete}). Wigderson gave a fairly comprehensive survey that discusses the complexity of reachability in various computational models \cite{widgerson-survey}. 

The time complexity of directed reachability is fairly well understood. Standard graph traversal algorithms such as DFS and BFS solve this problem in linear time. We also have a $O(\log^2 n)$ space algorithm due to Savitch \cite{Sav70}, however it requires $O(n^{\log n})$ time. The question, whether there exists a single algorithm that decides reachability in polynomial time and polylogarithmic space is unresolved. In his survey, Wigderson asked whether it is possible to design a polynomial time algorithm that uses only $O(n^{\epsilon})$ space, for some constant $\epsilon < 1$ \cite{widgerson-survey}. This question is also still open. In 1992, Barnes, Buss, Ruzzo and Schieber made some progress on this problem and gave an algorithm for directed reachability that requires polynomial time and $O(n/2^{\sqrt{\log n}})$ space \cite{BBRS92}.

Planar graphs are a natural topological restriction of general graphs consisting of graphs that can be embedded on the surface of a plane such that no two edges cross. {\em Grid graphs} are a subclass of planar graphs, where the vertices are placed at the lattice points of a two dimensional grid and edges occur between a vertex and its immediate adjacent horizontal or vertical neighbor. 

Asano and Doerr provided a polynomial time algorithm to compute the \emph{shortest path} (hence can decide reachability) in grid graphs which uses $O(n^{1/2+\epsilon})$ space, for any small constant $\epsilon>0$ \cite{Asano11}. Imai et al extended this to give a similar bound for reachability in planar graphs \cite{INPVW13}. Their approach was to use a space efficient method to design a separator for the planar graph and use divide and conquer strategy. Note that although it is known that reachability in grid graphs reduces to planar reachability in logspace, however since this class (polynomial time and $O(n^{1/2+\epsilon})$ space) is not closed under logspace reductions, planar reachability does not follow from grid graph reachability. Subsequently the result of Imai et al was extended to the class of {\em high-genus} and {\em $H$-minor-free} graphs \cite{CPTVY14}. Recently Asano et al gave a $\tilde{O}(\sqrt{n})$ space and polynomial time algorithm for reachability in planar graphs, thus improving upon the previous space bound \cite{AKNW14}. More details on known results can be found in a recent survey article \cite{Vin14}.

In another line of work, Kannan et al gave a $O(n^{\epsilon})$ space and polynomial time algorithm for solving reachability problem in \emph{unique path graphs} \cite{KKR08}. Unique path graphs are a generalization of {\em strongly unambiguous} graphs and reachability problem in strongly unambiguous graphs is known to be in {\SC} (polynomial time and polylogarithmic space) \cite{BJLR91, logdcflinsc2}. Reachability in strongly unambiguous graphs can also be decided by a $O(\log^2 n/ \log \log n)$ space algorithm, however this algorithm requires super polynomial time \cite{AL98}. {\SC} also contains the class {\em randomized logspace} or {\RL} \cite{Nisan95}. We refer the readers to a recent survey by Allender \cite{allender-update} to further understand the results on the complexity of reachability problem in {\UL} and on certain special subclasses of directed graphs.

\subsection*{Our Contribution}
We show that reachability in directed layered planar graphs can be decided in polynomial time and $O(n^{\epsilon})$ space for any constant $\epsilon >0$.  A layered planar graph is a planar graph where the vertex set is partitioned into layers (say $L_0$ to $L_m$) and every edge occurs between layers $L_i$ and $L_{i+1}$ only. Our result significantly improves upon the previous space bound due to \cite{INPVW13} and \cite{AKNW14} for layered planar graphs.

\begin{theorem}
 \label{thm:layeredplanarreach}
For every $\epsilon >0$, there is a polynomial time and $O(n^{\epsilon})$ space algorithm that decides reachability in directed layered planar graphs.
\end{theorem}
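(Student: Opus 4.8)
The plan is to decide reachability by a separator-based divide and conquer, but to arrange the recursion so that the space in use at any moment is $O(n^\epsilon)$ rather than the $O(\sqrt n)$ incurred by materializing a single planar separator as in \cite{INPVW13}. Two baseline observations frame the approach. First, because the graph is layered, a single layer $L_i$ is a reachability separator: every $s$-$t$ path that crosses from the layers below $i$ to the layers above $i$ must pass through a vertex of $L_i$, so a frontier-style \algoBFS\ that remembers only the reachable vertices of the current layer decides reachability in space proportional to the maximum layer width. Second, a balanced planar separator splits the graph into pieces of size at most $cn$ and lets us recurse, but storing its $\Theta(\sqrt n)$ vertices is exactly what pins the space at $\sqrt n$. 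The goal is to interpolate between these so that neither the width nor any single separator is ever stored in full.

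The central object is a recursively defined separator family, produced by \algoSepFam. Using the space-efficient planar-separator routine \PlSep\ of \cite{INPVW13} as a black box, I would build a nested decomposition of constant depth $d = O(1/\epsilon)$: at the top the graph is partitioned into regions whose mutual boundary (the interface vertices through which all inter-region paths must pass) can be arranged to have size $O(n^\epsilon)$; each region is partitioned again with interface $O(n^\epsilon)$, and so on, until the bottom-level regions are small enough to solve directly. The point where the layered hypothesis is essential is precisely this interface bound: the separators must be chosen to respect the layer ordering, using the fact that the non-crossing structure of edges between consecutive layers lets a cut be described by few vertices per layer it meets, so that generic planarity (which would only give $\sqrt n$) is not enough.

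Reachability is then decided by an implicit breadth-first search, \algoImplicitBFS, on the top-level skeleton graph whose nodes are the interface vertices and whose edges record reachability between interface vertices lying in a common region. The skeleton is never stored: whenever \algoImplicitBFS\ needs to test whether one interface vertex reaches another within a region, it recurses into that region and repeats the construction one level down. Because the regions (the subtrees of the recursion) are visited one at a time in a depth-first manner, the working space of sibling subproblems is reused, so the total space is just the sum of the per-level interface and frontier sizes along one root-to-leaf path, namely $O(d \cdot n^\epsilon) = O(n^\epsilon)$ for constant $\epsilon$. Since the recursion has constant depth and each level has only polynomially many regions and interface vertices, the recomputation inflates the running time by a polynomial factor per level, keeping the overall time polynomial.

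The main obstacle is the separator family itself. I must show that \algoSepFam\ can, using only $O(n^\epsilon)$ space at each level, produce a balanced separator of a large region whose interface is $O(n^\epsilon)$ without ever materializing a full $\Theta(\sqrt n)$ separator, and that composing these cuts across $d$ levels keeps every interface within budget. Establishing this interface bound is exactly the step that forces the use of the layered-planar structure rather than generic planarity, and proving it — together with verifying that the recomputation-heavy \algoImplicitBFS\ stays within polynomial time while correctly reusing space across the recursion — is where the real work lies.
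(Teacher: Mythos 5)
There is a genuine gap, and it sits exactly where you locate ``the real work'': the interface bound you need is false. Your plan requires partitioning a layered planar graph into polynomially smaller regions whose mutual boundary has size $O(n^{\epsilon})$, claiming the layered structure lets a balanced cut meet few vertices. But the $n\times n$ layered grid graph (layers are the anti-diagonals $i+j=c$) is itself layered planar --- indeed the paper reduces the general problem \emph{to} this case --- and any set of vertices whose removal splits the $N=n^2$-vertex grid into components of size at most $2N/3$ has size $\Omega(\sqrt{N})$. So no constant-depth decomposition with per-level interfaces of total size $O(N^{\epsilon})$ exists; layeredness buys you nothing here, because the grid is the canonical layered graph. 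The only way to keep per-region boundaries at $O(N^{\epsilon})$ is to use many small regions (blocks of side $n^{\epsilon/2}$, say), but then the skeleton graph on all interface vertices has $\Theta(n\cdot n^{1-\epsilon/2})$ nodes, and your \algoImplicitBFS\ must somehow track visited/frontier status over it: a generic BFS or DFS needs one bit per skeleton vertex, i.e.\ $\Omega(\sqrt{N})$ space again. Your proposal asserts the space is ``the sum of the per-level interface and frontier sizes'' but gives no mechanism for keeping the frontier and visited set small, and for BFS on a grid skeleton the frontier alone can be $\Theta(\sqrt{N})$.

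This missing mechanism is precisely the paper's contribution. The paper also cuts the grid into $k^2$ blocks by gridlines (so its ``skeleton'', the auxiliary graph $H$, is large, containing entire gridlines of $\Theta(n)$ vertices each), but it never stores a visited set over $H$. Instead it runs a modified DFS whose memory is compressed to $O(k)$ words by two structural facts special to layered grid graphs: (i) every path in $H$ has length at most $2k+1$ (Lemma~\ref{lem:pathlength}, proved via Hall's theorem using monotonicity of north/east edges), which bounds the DFS stack; and (ii) it suffices to remember only the topmost visited vertex on each vertical gridline and the leftmost on each horizontal gridline (arrays $A_v,A_h$), because a path-crossing argument (Lemma~\ref{lem:crossing}, used in Lemma~\ref{lem:reach}) shows that anything reachable ``below'' these markers has already been explored. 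Recursing on blocks with $k=n^{\epsilon/2}$ then gives constant depth, $O(n^{\epsilon})$ space, and polynomial time. Without an analogue of (i) and (ii) --- or a correct substitute for your false separator claim --- your outline cannot be completed.
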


Reachability in layered grid graphs is in {\UL} which is a subclass of {\NL} \cite{ABCDR09}. Subsequently this result was extended to the class of all planar graphs \cite{BTV}. Allender et al also gave some hardness results the reachability problem in certain subclasses of layered grid graphs. Specifically they showed that, {\oneLGGR} is hard for $\NC^1$ and {\oneoneLGGR} is hard for $\TC^0$ \cite{ABCDR09}. Both these problems are however known to be contained in {\L} though.

Firstly we argue that its enough to consider layered grid graphs (a subclass of general grid graphs). We divide a given layered grid graph into a courser grid structure along $k$ horizontal and $k$ vertical lines (see Figure \ref{fig:gridgraph}). We then design a modified DFS strategy that makes queries to the smaller graphs defined by these gridlines (we assume a solution in the smaller graphs by recursion) and visits every reachable vertex from a given start vertex. The modified DFS stores the highest visited vertex in each vertical line and the left most visited vertex in each horizontal line. We use this information to avoid visiting a vertex multiple number of times in our algorithm. We choose the number of horizontal and vertical lines to divide the graph appropriately to ensure that the algorithm runs in the required time and space bound. 

The rest of the paper is organized as follows. In Section \ref{sec:prelim}, we give some basic definitions and notations that we use in this paper. We also state certain earlier results that we use in this paper. In Section \ref{sec:LGGR}, we give a proof of Theorem \ref{thm:layeredplanarreach}.

\section{Preliminaries}
\label{sec:prelim}

We will use the standard notations of graphs without defining them explicitly and follow the standard model of computation to discuss the complexity measures of the stated algorithms. In particular, we consider the computational model in which an input appears on a read-only tape and the output is produced on a write-only tape and we only consider an internal read-write tape in the measure of space complexity. Throughout this paper, by $\log$ we mean logarithm to the base $2$. We denote the set $\{1,2,\cdots,n\}$ by $[n]$. Given a graph $G$, let $V(G)$ and $E(G)$ denote the set of vertices and the set of edges of $G$ respectively.
\begin{definition}[Layered Planar Graph]
 A planar graph $G=(V,E)$ is referred as \emph{layered planar} if it is possible to represent $V$ as a union of disjoint partitions, $V=V_1\cup V_2 \cup \dots \cup V_k$, for some $k>0$, and there is a planar embedding of edges between the vertices of any two consecutive partitions $V_i$ and $V_{i+1}$ and there is no edge between two vertices of non-consecutive partitions.  
\end{definition}
Now let us define the notion of layered grid graph and also note that grid graphs are by definition planar.
\begin{definition}[Layered Grid Graph]
 A directed graph $G$ is said to be a $n \times n$ \emph{grid graph} if it can be drawn on a square grid of size $n \times n$ and two vertices are neighbors if their $L_1$-distance is one. In a grid graph a edge can have four possible directions, i.e., north, south, east and west, but if we are allowed to have only two directions north and east, then we call it a \emph{layered grid graph}.
\end{definition}

We also use the following result of Allender et al to simplify our proof \cite{ABCDR09}.
\begin{proposition}[\cite{ABCDR09}]
 \label{prop:reductionLGGR}
 Reachability problem in directed layered planar graphs is log-space reducible to the reachability problem in layered grid graphs.
\end{proposition}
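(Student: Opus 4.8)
The plan is to realize the given layered planar graph directly as a layered grid graph by first fixing a planar embedding and then placing the vertices at lattice points so that every directed edge becomes a monotone staircase that travels only north and east. First I would compute, in log-space, a planar embedding of $G$ together with the induced top-to-bottom ordering of the vertices inside each layer $V_i$; since $G$ is planar, the edges joining consecutive layers $V_i$ and $V_{i+1}$ can be drawn as pairwise non-crossing curves, and this non-crossing order is exactly what a grid drawing needs.

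The crux is to assign coordinates that make \emph{every} directed edge point to the north-east, so that it can be routed using only the two permitted directions. The layer index already gives a horizontal (west-to-east) progression, but within a layer an edge may connect a high vertex to a low one, which naively would force a forbidden south step. To avoid this I would pass to a \emph{dominance drawing}: reduce the reachability instance to one on a planar single-source single-sink graph (adding a global source and sink on the outer face and restricting to the part reachable from $s$ and co-reachable to $t$), and then set $X(v)$ and $Y(v)$ to be the ranks of $v$ in two topological orders derived from the embedding (the ``leftmost'' and ``rightmost'' orders). This guarantees $X(u)<X(v)$ and $Y(u)<Y(v)$ whenever $u\rightarrow v$, while preserving the non-crossing property. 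Each edge $u\rightarrow v$ is then replaced by a directed staircase path from $(X(u),Y(u))$ to $(X(v),Y(v))$ made of unit east and north steps, with fresh dummy vertices inserted at the intermediate lattice points; after scaling the grid by a fixed $\mypoly(n)$ factor to create room, the non-crossing order lets these staircases be laid out disjointly, yielding a layered grid graph $G'$ of polynomial size.

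Correctness would follow because the dummy vertices on each staircase have in-degree and out-degree one and therefore only relay flow along the single original edge they encode, so a path from $s$ to $t$ exists in $G$ if and only if a (necessarily monotone) path exists between the images of $s$ and $t$ in $G'$. For the complexity bound I would check that each ingredient---the planar embedding, the two topological rankings (which amount to counting along faces/paths in a planar graph), the coordinate arithmetic, and the emission of the grid edges---is computable in log-space, and then invoke closure of log-space reductions under composition.

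The step I expect to be the main obstacle is producing the \emph{monotone (dominance) coordinate assignment in log-space}: it is what rules out south and west moves and hence what actually forces the target class to be layered grid graphs, and it requires an upward-planar structure that is not handed to us for free by the mere layering. Reducing cleanly to a planar single-source, single-sink graph while keeping the embedding, the reachability answer, and the log-space budget all intact, together with controlling the polynomial blow-up incurred by the fine-grid routing, is where the real care is needed.
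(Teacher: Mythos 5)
First, a point of comparison: the paper never proves this proposition itself --- it is imported wholesale from \cite{ABCDR09} --- so there is no in-paper argument to match; your proposal must stand on its own as a proof, and it does not. The most serious gap is circularity. To obtain the dominance drawing you propose ``restricting to the part reachable from $s$ and co-reachable to $t$,'' but computing reachable or co-reachable sets in a directed (layered) planar graph \emph{is} the problem being reduced: it is not known to be solvable in logspace (the best known upper bound for it is {\UL}), so a logspace reduction cannot invoke it as a subroutine. If you could carry out that restriction in logspace, you could already decide the reachability instance and the reduction would be moot.

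The second gap is that your whole plan presupposes the input is a DAG. Under the paper's definition, edges of a layered planar graph may be directed either way between consecutive layers, and directed cycles are possible: take $V_1=\{a\}$, $V_2=\{b,c\}$, $V_3=\{d\}$ with edges $a\to b$, $b\to d$, $d\to c$, $c\to a$; placing $b$ above $c$ gives a legitimate layered planar embedding, yet $a\to b\to d\to c\to a$ is a directed cycle. For such inputs no topological order, and hence no dominance drawing, exists; and since layered grid graphs are acyclic, a correct reduction cannot be a mere re-embedding of $G$ --- it must genuinely transform the graph to eliminate cycles, a mechanism your construction lacks entirely. Finally, even on acyclic instances you assert, but do not establish, that the two topological rankings (and hence the coordinates $X(v),Y(v)$) are computable in logspace; topological sorting of a DAG is not known to be in {\L}. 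You correctly identified the monotone coordinate assignment as the crux, but the fix you offer for it is exactly the circular step above, so the crux remains unresolved.
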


\subsection{Class {\ASC} and its properties}
$\TISP(t(n),s(n))$ denotes the class of languages decided by a deterministic Turing machine that runs in $O(t(n))$ time and $O(s(n))$ space. Then, $\SC = \TISP(n^{O(1)},(\log n)^{O(1)})$. Expanding the class {\SC}, we define the complexity class {\ASC} (short for {\tt near-SC}) in the following definition.

\begin{definition}[Complexity Class {\tt near-SC} or {\ASC}]
For a fixed $\epsilon > 0$, we define ${\eASC}:= \TISP(n^{O(1)},n^{\epsilon})$. The complexity class {\ASC} is defined as 
\[{\ASC}:=\bigcap_{\epsilon > 0} {\eASC}.\]
\end{definition}

We next show that {\ASC} is closed under log-space reductions. This is an important property of the class {\ASC} and will be used to prove Theorem \ref{thm:layeredplanarreach}. Although the proof is quite standard, but for the sake of completeness we provide it here. 

\begin{theorem}
 \label{thm:closure}
 If $A \le _l B$ and $B \in {\ASC}$, then $A \in {\ASC}$.
\end{theorem}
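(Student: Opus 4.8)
The plan is to reduce the statement to a single-parameter claim and then to compose the two computations by recomputation rather than by storing the intermediate string. Since $\ASC=\bigcap_{\epsilon>0}\TISP(n^{O(1)},n^{\epsilon})$, it suffices to fix an arbitrary $\delta>0$ and exhibit a deterministic machine that decides $A$ in polynomial time and $O(n^{\delta})$ space. To do this I would first pin down the two ingredients we are handed. Because $A\le_l B$, there is a transducer $f$ computing the reduction in $O(\log n)$ space; since a log-space transducer runs in polynomial time, its output length is bounded by $n^{c}$ for some constant $c$, where $n=|x|$. Because $B\in\ASC$, for every $\epsilon>0$ there is a machine $M_{\epsilon}$ deciding $B$ in polynomial time and $O(m^{\epsilon})$ space on inputs of length $m$.

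The naive composition --- compute $y=f(x)$, write it on the work tape, and run $M_{\epsilon}$ on $y$ --- fails, and this is exactly the point where care is needed: $y$ may have length as large as $n^{c}$, so merely storing it already blows the $O(n^{\delta})$ space budget. I would instead use the standard on-the-fly recomputation technique. We simulate $M_{\epsilon}$ on the (never materialized) input $y=f(x)$, keeping on our work tape only the work-tape contents of $M_{\epsilon}$ together with an $O(\log n)$-bit counter recording the current position of $M_{\epsilon}$'s input head, which ranges over $\{1,\dots,n^{c}\}$. Whenever $M_{\epsilon}$ queries the bit of $y$ under its input head, we pause the simulation, run the transducer $f$ on $x$ from scratch while counting emitted output bits until we reach the desired position, hand that single bit back to $M_{\epsilon}$, and discard the work of $f$. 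Computing one bit of $y$ this way costs only the $O(\log n)$ space used by $f$ plus the $O(\log n)$-bit position counter.

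It remains to choose $\epsilon$ and account for the resources. I would set $\epsilon=\delta/c$. Then $M_{\epsilon}$ runs in space $O(m^{\epsilon})=O((n^{c})^{\delta/c})=O(n^{\delta})$, and adding the $O(\log n)$ overhead for the head counter and for each recomputation of $f$ leaves the total space at $O(n^{\delta})$, since $\log n=o(n^{\delta})$. For the time bound, $M_{\epsilon}$ makes $\mathrm{poly}(m)=\mathrm{poly}(n)$ steps, and each input query triggers one polynomial-time run of $f$, so the entire simulation runs in polynomial time. Hence $A\in\TISP(n^{O(1)},n^{\delta})$, and as $\delta>0$ was arbitrary, $A\in\ASC$. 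The only real obstacle is the one flagged above, namely never storing the intermediate string $f(x)$; trading a polynomial factor of time for this space saving is harmless because the target class only demands polynomial time.
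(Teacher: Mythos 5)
Your proposal is correct and follows essentially the same route as the paper: bound the output length of the log-space reduction by $n^{c}$, rescale the space parameter to $\epsilon' = \epsilon/c$, and run the machine for $B$ on the reduced instance. In fact, your write-up is more careful than the paper's own proof, which merely says that $f(x)$ ``appears in a separate write-once output tape'' and glosses over the very issue you handle explicitly---that $f(x)$ cannot be stored within the $O(n^{\delta})$ space budget and must instead be recomputed bit-by-bit on demand via the standard on-the-fly composition trick.
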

\begin{proof}
 Let us consider that a log-space computable function $f$ be the reduction from $A$ to $B$. It is clear that for any $x \in A$ such that $|x|=n$, $|f(x)| \le n^c$, for some constant $c>0$. We can think that after applying the reduction, $f(x)$ appears in a separate write-once output tape and then we can solve $f(x)$, which is an instance of the language $B$ and now the input length is at most $n^c$. Now take any $\epsilon>0$ and consider $\epsilon' =\frac{\epsilon}{c}>0$. $B \in {\ASC}$ implies that $B \in {\ASC_{\epsilon'}}$ and as a consequence, $A \in {\eASC}$. This completes the proof.
\end{proof}

\section{Reachability in Layered Planar Graphs}
\label{sec:LGGR}
In this section we prove Theorem \ref{thm:layeredplanarreach}. We show that the reachability problem in layered grid graphs, (denoted as {\LGGR}) is in {\ASC} (Theorem \ref{thm:LGGR}). Then by applying Proposition \ref{prop:reductionLGGR} and Theorem \ref{thm:closure} we have the proof of Theorem \ref{thm:layeredplanarreach}.

\begin{theorem}
 \label{thm:LGGR}
 {\LGGR} $\in$ {\ASC}.
\end{theorem}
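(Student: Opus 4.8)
The plan is to prove the equivalent statement \LGGR $\in$ \ASC\ by a recursive divide-and-conquer that never stores the reachability relation of any subgrid explicitly, recomputing it on demand so that the space is dominated only by the data kept alive across the (constantly many) recursion levels. Given an $n\times n$ layered grid graph $G$, I would overlay a coarser grid of $k$ equally spaced vertical lines and $k$ equally spaced horizontal lines, cutting $G$ into $k^2$ blocks each of side roughly $n/k$. Call the vertices lying on these lines \emph{boundary vertices}; there are $O(kn)$ of them, and the boundary of a single block has only $O(n/k)$ of them. Reachability between two boundary vertices on the frame of a common block is itself an \LGGR\ instance on a grid of side $n/k$, which I answer by a recursive call. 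Thus the boundary vertices, together with these implicit ``block-edges'', form a smaller (implicit) graph on which I run a traversal from $s$; then $t$ is reachable iff the traversal reaches the boundary of the block containing $t$ and, by one further recursive call inside that block, $t$ itself (with $s$ handled symmetrically by first querying its own block).

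The traversal is a \emph{modified} depth-first search whose only persistent state is, for each of the $k$ vertical lines, the highest boundary vertex visited on it, and for each of the $k$ horizontal lines, the left-most visited boundary vertex: altogether $O(k)$ numbers, i.e. $\tildeO(k)$ bits. The structural fact I would exploit is monotonicity: since every edge points north or east, every path is monotone and crosses each gridline at most once. I would use this to argue, first, that these frontier arrays suffice to recognise an already-explored boundary vertex and so prevent the DFS from expanding any boundary vertex twice, and second, that the active DFS stack can be charged to the $O(k)$ gridline crossings of a monotone path and hence has depth $O(k)$. Together these keep the per-level workspace at $\tildeO(k)$. Whenever the search sits at a boundary vertex $u$, it computes the successors of $u$ by recursively testing reachability, inside each incident block, from $u$ to the $O(n/k)$ boundary vertices of that block.

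For the parameters, I would set $k=n^{\delta}$. The block side then shrinks from $n^{1-i\delta}$ to $n^{1-(i+1)\delta}$ at level $i$, so the recursion has constant depth $d=1/\delta$ and bottoms out on blocks of constant side, which are solved by an ordinary search in space linear in the block size. Since space is additive across recursion levels, the total space is $O(d)\cdot\tildeO(k)=\tildeO(n^{\delta})$, which is $O(n^{\epsilon})$ once $\delta$ is chosen below $\epsilon$ (absorbing the polylog factors). For time, one level expands $O(kn)$ boundary vertices and issues $O(n/k)$ recursive reachability queries per vertex, i.e. $O(n^2)$ recursive calls, giving $T(n)=O(n^2)\cdot T(n^{1-\delta})$ over constant depth; a product of constantly many polynomial factors is polynomial. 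This places \LGGR\ in $\eASC$ for every $\epsilon>0$, hence in \ASC, and then Proposition~\ref{prop:reductionLGGR} together with Theorem~\ref{thm:closure} delivers Theorem~\ref{thm:layeredplanarreach}.

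The main obstacle is precisely the two monotonicity claims of the second paragraph: proving that the one-number-per-line frontier is a \emph{sound and complete} substitute for a full $O(kn)$-bit visited set, and that the stack stays shallow. Because a subgraph of the grid may omit vertices and edges, the set of reachable vertices on a line need not be a contiguous segment, so the correctness of the pruning cannot follow from interval structure alone; it must be tied to the exact order in which the modified DFS explores successors, and I would have to establish the invariant that everything below (respectively, left of) a stored frontier value has already been completely processed. Getting this invariant right — so that no reachable vertex is skipped and none is expanded twice, while the recomputation-on-demand keeps space additive — is where the real work lies; the recurrence bookkeeping and the appeal to Theorem~\ref{thm:closure} are comparatively routine.
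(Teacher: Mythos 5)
Your skeleton coincides with the paper's own proof: the same $k\times k$ block decomposition, the same implicit auxiliary graph on gridline vertices whose edges are resolved by recursive reachability queries, the same one-value-per-gridline frontier arrays (the paper's $A_v$ and $A_h$), the same $O(k)$ bound on the DFS stack (Lemma \ref{lem:pathlength}), the same recompute-don't-store recursion with the same recurrences, and the same closing appeal to Proposition \ref{prop:reductionLGGR} and Theorem \ref{thm:closure}. So this is not a different route; it is the paper's route with its central lemma left open, and the gap you flag in your last paragraph is genuine --- it is the mathematical core of the theorem, not a deferred detail.

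Two concrete problems with the missing step. First, the invariant you propose to establish (``everything below, respectively left of, a stored frontier value has already been completely processed'') is false as stated: after a deep first-child subtree raises $A_v(i)$ high, a reachable vertex $x$ with $x \prec A_v(i)$ whose own predecessor is still sitting unexpanded on the stack has not been processed at all. What the paper actually proves (Lemma \ref{lem:reach}) is time-local: at the moment the \emph{first}-visited predecessor $w$ of $x$ scans $x$, if $x \prec A_v(i)$ then $x$ has already been visited, so pruning at that moment loses nothing. Second, proving this is not a monotonicity argument; it requires planarity. The paper's crossing lemma (Lemma \ref{lem:crossing}) shows that two paths joining boundary vertices of a single block, whose endpoint pairs interleave along the block boundary, must share a vertex, so both ``swapped'' reachabilities also hold; combining this with the fixed counter-clockwise scan order of out-neighbors (lower, respectively more-left, successors are always tried first) and induction on distance yields the contradiction in both cases ($w'$ left or right of $w$) of Lemma \ref{lem:reach}. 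Your proposal names neither the crossing lemma nor any specific scan order, and without both of them the frontier pruning can, for all your argument shows, skip reachable vertices --- precisely because, as you yourself note, the reachable subset of a gridline need not be an interval. A smaller omission: reachability between two vertices on the \emph{same} gridline is deliberately excluded from the auxiliary graph and handled separately in the paper (corner-vertex edges and Claim \ref{clm:gridreach}); your implicit graph does not account for this case.
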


To establish Theorem \ref{thm:LGGR} we define an auxiliary graph in Section \ref{sec:H} and give the required algorithm in Section \ref{sec:algo}.

\subsection{The Auxiliary Graph $H$}
\label{sec:H}

Let $G$ be a $n \times n$ layered grid graph. We denote the vertices in $G$ as $(i,j)$, where $0\le i,j \le n$. Let $k$ be a parameter that determines the number of pieces in which we divide $G$. We will fix the value of $k$ later to optimize the time and space bounds. Assume without loss of generality that $k$ divides $n$. Given $G$ we construct an auxiliary graph $H$ as described below. 

Divide $G$ into $k^2$ many blocks of dimension $n/k \times n/k$. More formally, the vertex set of $H$ is
\[
V(H) := \{(i,j) \ |\  \textrm{either $i$ or $j$ is a non-negative multiple of $n/k$}.\}
\]
Note that $V(H) \subseteq V(G)$. We consider $k^2$ many {\em blocks} $G_1,G_2,\cdots,G_{k^2}$, where a vertex $(i,j) \in V(G_l)$ if and only if $ i' \frac{n}{k}  \le i \le (i'+1) \frac{n}{k} $ and $ j' \frac{n}{k} \le j \le (j'+1) \frac{n}{k} $, for some integer $i' \ge 0$ and $j' \ge 0$ and the vertices for which any of the four inequalities is strict will be referred as \emph{boundary vertices}. Moreover, we have $l= i' \cdot k + j' +1 $. $E(G_l)$ is the set of edges in $G$ induced by the vertex set $V(G_l)$.

For every $i \in [k+1]$, let $L_h(i)$ and $L_v(i)$ denote the set of vertices, $L_h(i)=\{(i',j') | j' = (i-1) \frac{n}{k}\}$ and $L_v(i)=\{(i',j') | i' = (i-1) \frac{n}{k}\}$. When it is clear from the context, we will also use $L_h(i)$ and $L_v(i)$ to refer to the corresponding gridline in $H$. Observe that $H$ has $k+1$ vertical gridlines and $k+1$ horizontal gridlines.

For every pair of vertices $u,v \in V(G_l)\cap V(H)$, for some $l$, add the edge $(u,v)$ to $E(H)$ if and only if there is a path from $u$ to $v$ in $G_l$, unless $u,v \in L_v(i)$ or $u,v \in L_h(i)$, for some $i$. Also for every pair of vertices $u,v \in V(G_l)$, for some $l$, such that $u=(i_1,j_1)$ and $v=(i_2,j_2)$, where $j_1=j'  \frac{n}{k}$ and $j_2=(j'+1)\frac{n}{k} $, for some $j'$ or $i_1=i' \frac{n}{k} $ and $i_2=(i'+1)\frac{n}{k} $, for some $i'$, we add edge between $u$ and $v$ in the set $E(H)$ if and only if there is a path  from $u$ to $v$ in $G_l$ and we call such vertices as \emph{corner vertices}.

Before proceeding further, let us introduce a few more notations that will be used later. For $j \in [k]$, let $L_h(i,j)$ denote the set of vertices in $L_h(i)$ in between $L_v(j)$ and $L_v(j+1)$. Similarly we also define $L_v(i,j)$ (see Figure \ref{fig:gridgraph}). For two vertices $x,y \in L_v(i)$, we say $x \prec y$ if $x$ is {\em below} $y$ in $L_v(i)$. For two vertices $x,y \in L_h(i)$, we say $x \prec y$ if $x$ is {\em right of} $y$ in $L_h(i)$.

\begin{center}
 \begin{figure}
\centering
\begin{tikzpicture}[scale=.8,shorten >=.25mm,>=latex]
 \tikzstyle gridlines=[color=black!20,very thin]
 \draw[color=black!20,very thin] (0,0) grid (9,9);

 \foreach \x in {0,...,9}
  \foreach \y in {0,...,9}
   {
     \draw[fill,color=black!50] (\x,\y) circle (0.45mm);
   }


 \draw[left] node at (0,3) {$L_h(2)$};
 \draw[left] node at (0,6) {$L_h(3)$};

 \node at (3,-0.6) {$L_v(2)$};
 \node at (6,-0.6) {$L_v(3)$};
 
  \draw[fill=white!70!gray] (0,0) rectangle (3,3);

  \foreach \y in {1,...,9}
  {
    \draw[-,>=latex,thick] (0,\y-1)--(0,\y);
    \draw[-,>=latex,thick] (3,\y-1)--(3,\y);
    \draw[-,>=latex,thick] (6,\y-1)--(6,\y);
    \draw[-,>=latex,thick] (9,\y-1)--(9,\y);
  }
  
  \foreach \x in {1,...,9}
  {
    \draw[-,>=latex,thick] (\x-1,0)--(\x,0);
    \draw[-,>=latex,thick] (\x-1,3)--(\x,3);
    \draw[-,>=latex,thick] (\x-1,6)--(\x,6);
    \draw[-,>=latex,thick] (\x-1,9)--(\x,9);
  }
  \node at (1.5,1.5) {$G_1$};
  \node at (4.5,1.5) {$G_2$};
  \node at (7.5,1.5) {$G_3$};
  \node at (1.5,4.5) {$G_4$};
  \node at (4.7,4.5) {$G_5$};
  \node at (7.5,4.5) {$G_6$};
  \node at (1.5,7.5) {$G_7$};
  \node at (4.5,7.5) {$G_8$};
  \node at (7.5,7.5) {$G_9$};
  \node at (3.7,4.5) {$L_v(2,2)$};
  \node at (4.5,3.5) {$L_h(2,2)$};
  
  \node at (-0.2,-0.2) {$s$};
  \node at (9.2,9.2) {$t$};
  
  \draw[->,dashed,thick] (0,0) -- (0,1);
  \draw[->,dashed,thick] (0,1)-- (1,1) -- (7,1);
  \draw[->,dashed,thick] (7,1)-- (7,5); 
  \draw[->,dashed,thick] (7,5)-- (8,5);
  \draw[->,dashed,thick] (8,5)-- (8,8) -- (8,9);
  \draw[->,dashed,thick] (8,9)-- (9,9);


\end{tikzpicture}
\caption{An example of layered grid graph and its decomposition into blocks}
   \label{fig:gridgraph}
\end{figure}
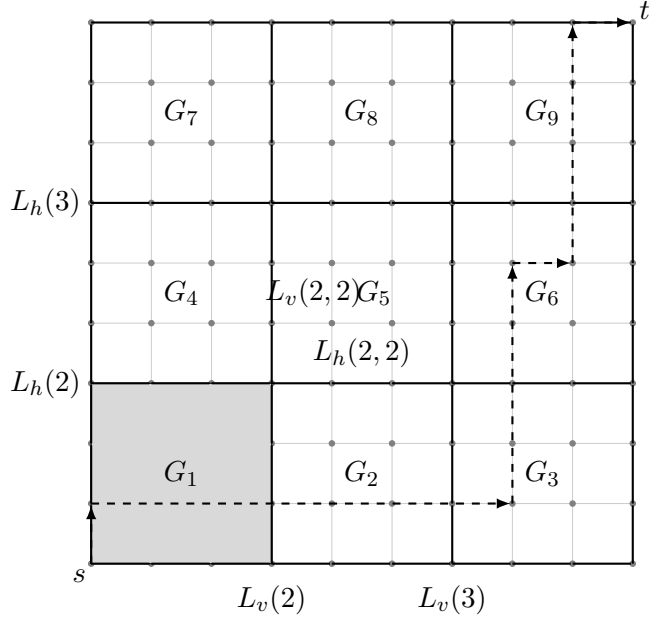
\end{center}

\begin{lemma}
\label{lem:auxiliary}
Let $u$ and $v$ be two boundary vertices in $G$ such that they belong to two different vertical or horizontal gridlines or $u\in G_i$ and $v \in G_j$, for $i \ne j$. There is a path from $u$ to $v$ in $G$ if and only if there is path from $u$ to $v$ in the auxiliary graph $H$.
\end{lemma}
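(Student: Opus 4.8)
The plan is to prove the two implications separately, with the forward direction (a $G$-path yields an $H$-path) being where essentially all the work lies.

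The converse direction is routine. By construction every edge $(a,b)\in E(H)$ — whether it comes from the first rule or from the corner-vertex rule — is inserted only because there is a genuine path from $a$ to $b$ inside some block $G_l$, and each $G_l$ is a subgraph of $G$. So given any path $u=x_0,x_1,\dots,x_m=v$ in $H$, I would replace each edge $(x_i,x_{i+1})$ by a corresponding path inside the relevant block and concatenate these to obtain a walk, hence a path, from $u$ to $v$ in $G$.

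For the forward direction I would exploit the defining feature of a layered grid graph: every edge points north or east, so along any directed path both coordinates are non-decreasing and each path is monotone. Given a path $P$ from $u$ to $v$ in $G$, I would group its vertices into maximal subpaths that live inside a single block $G_l$; since the blocks are separated by the gridlines, leaving a block forces $P$ through a vertex sitting on a gridline, and such a vertex lies in $V(H)$. Writing the resulting sequence of transition points as $u=w_0,w_1,\dots,w_r=v$, it remains to check that each consecutive pair $(w_i,w_{i+1})$ is an edge of $H$; chaining these edges then produces the desired $H$-path. The heart of the matter is the case analysis for $(w_i,w_{i+1})\in E(H)$. By monotonicity a block is always entered through its west or south side and left through its east or north side, so when $w_i$ and $w_{i+1}$ are the entry and exit of a block they lie on two different gridlines and the first edge rule supplies the edge — it is precisely the excluded same-gridline clause that is avoided here. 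The remaining, delicate situation is when the subpath runs purely along one gridline, so that $w_i$ and $w_{i+1}$ share an $L_v(i)$ or an $L_h(i)$ and the first rule is silent; here I would argue that such a stretch passes through the crossing points where that gridline meets the perpendicular gridlines, and that consecutive crossing points (the corner vertices) are joined by the corner-vertex rule, which is exactly designed to reconnect the pairs excluded by the first rule.

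The step I expect to be the main obstacle is making the decomposition into maximal within-block subpaths rigorous and compatible with the edge set of $H$, since gridline vertices belong to several blocks at once and ``the block containing a piece of $P$'' is genuinely ambiguous along gridlines. Handling this cleanly — deciding how to attribute gridline travel to a single block, and verifying that after this bookkeeping every consecutive pair of transition points is matched either by the first rule or by the corner-vertex rule — is the crux; once the transition points are chosen consistently, the monotonicity-driven side analysis together with the two edge rules finishes the proof.
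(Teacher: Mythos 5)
Your skeleton matches the paper's proof: the backward direction by replacing each $H$-edge with a path inside a block, and the forward direction by cutting the $G$-path at gridline vertices and matching consecutive cut points against the two edge rules of $H$ (the paper asserts exactly such a decomposition, into pieces whose endpoints either lie on different gridlines or form a ``corner'' pair). However, the step you explicitly defer as ``the crux'' is a genuine gap, and the patch you sketch for it would fail. Rule 2, despite the name ``corner vertices,'' only creates an edge between a pair that spans a \emph{full} block side, i.e.\ both endpoints sit at \emph{consecutive} multiples of $n/k$. So it does join consecutive crossing points of a gridline stretch, as you say, but it gives nothing for a stretch that meets no crossing point, nor for the head of a stretch (before its first crossing point) or its tail (after the last one). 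Concretely, with $n/k=10$, suppose the path runs east from $(0,3)$ to $(10,3)$, then north along $L_v(2)$ to $(10,7)$, then east into the interior of $G_2$: if, as in your decomposition, both $(10,3)$ and $(10,7)$ become transition points, this pair shares $L_v(2)$, rule 1 is silenced by its ``unless'' clause, and rule 2 is vacuous since the stretch passes through no corner; the pair is simply not an edge of $H$ and the chain breaks. The same happens for the pairs $((10,3),(10,10))$ and $((10,10),(10,15))$ when the stretch does cross the corner $(10,10)$. Your proposal names this ambiguity but never resolves it, and the resolution is not the corner rule.

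The missing idea is that along-gridline travel must be \emph{absorbed} into rule-1 edges of the adjacent blocks rather than isolated into its own pieces: inside a gridline stretch, the only legitimate breakpoints are the corners it passes through; the portion before the first corner is appended to the preceding piece (it runs along the east or north boundary of the preceding block, and boundary vertices together with their induced edges belong to that block, so rule 1 still yields, e.g., the edge from the previous breakpoint to $(10,10)$ inside $G_1$), while the portion after the last corner is prepended to the following piece inside the block the path enters next (yielding, e.g., an edge from $(10,10)$ through $(10,15)$ onward to the next exit vertex); segments between two consecutive corners are exactly the pairs rule 2 covers. Monotonicity is what guarantees each enlarged piece stays inside a single block and that its endpoints lie on different gridlines, so rule 1 applies. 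This attribution is precisely what the paper's two-type decomposition encodes implicitly, and without it your chain of transition points need not be an $H$-path. (Chasing your ``delicate case'' to the very end of the path also shows it is not mere bookkeeping: if $v$ is a non-corner gridline vertex and the path reaches $v$ by merging into $v$'s own gridline strictly between two corners and never leaving it, no absorption into a neighbouring block is possible and $H$ has no edge entering $v$ at all --- so this case is exactly where the lemma's content, and even the precision of its hypotheses, resides.)
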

\begin{proof}
 As every edge $(a,b)$ in $H$ corresponds to a path from $a$ to $b$ in $G$, so if-part is trivial to see. Now for the only-if-part, consider a path $P$ from $u$ to $v$ in $G$. $P$ can be decomposed as $P_1 P_2 \cdots P_r$, such that $P_i$ is a path from $x_i\in V(G_l)$ to $x_{i+1}\in V(G_l)$, where it re-enters $V(G_l)$ for the next time, for some $l$, and is of following two types:
 \begin{enumerate}
 \item $x_i$ and $x_{i+1}$ belong to different horizontal or vertical gridlines; or
 \item $x_i$ and $x_{i+1}$ are two corner vertices.
 \end{enumerate}

Now by the construction $H$, for every $i$, there must be an edge $(x_i,x_{i+1})$ in $H$ for both the above cases and hence there is a path from $u$ to $v$ in $H$ as well.
\end{proof}
Now we consider the case when $u$ and $v$ belong to the same vertical or horizontal gridlines.
\begin{claim}
\label{clm:gridreach}
Let $u$ and $v$ be two vertices contained in either $L_v(i)$ or $L_h(i)$ for some $i$. Then deciding reachability between $u$ and $v$ in $G$ can be done in logspace.
\end{claim}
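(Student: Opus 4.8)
The plan is to exploit the very special structure of the vertices lying on a single gridline. Consider the case $u,v \in L_v(i)$ (the horizontal case is symmetric). All these vertices share the same first coordinate $i' = (i-1)\frac{n}{k}$, so $L_v(i)$ is a vertical segment of the grid. First I would observe that in a layered grid graph all edges point either north or east. Therefore any path that starts and ends on the same vertical line $L_v(i)$ can use an east-edge only if it is later compensated by... but there are no west-edges, so once a path leaves the line $L_v(i)$ to the east, it can never return to first coordinate $i'$. Consequently, any $u$-to-$v$ path with both endpoints on $L_v(i)$ must stay entirely on $L_v(i)$, using only north-edges. This is the key structural observation.

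Given this, reachability from $u$ to $v$ on the line reduces to checking whether every consecutive vertical edge between $u$ and $v$ is present. Concretely, writing $u=(i',a)$ and $v=(i',b)$, there is a path from $u$ to $v$ in $G$ if and only if $a \le b$ (so that $v$ is north of $u$, matching the edge orientation) and for each $c$ with $a \le c < b$ the north-edge from $(i',c)$ to $(i',c+1)$ is present in $E(G)$. I would verify the structural claim carefully: since the only edge directions are north and east, the second coordinate is non-decreasing and the first coordinate is non-decreasing along any path; a path returning to first coordinate $i'$ after having increased it is impossible, so the first coordinate must remain constant at $i'$ throughout, forcing every edge of the path to be a north-edge along $L_v(i)$.

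The logspace algorithm is then immediate: reject if $a > b$; otherwise iterate $c$ from $a$ to $b-1$, and for each $c$ query the input to check whether the edge $((i',c),(i',c+1))$ exists, rejecting if any is missing and accepting if all are present. This requires storing only the current value of $c$ and the fixed coordinates, i.e. $O(\log n)$ bits, and each edge-existence query is a direct read of the read-only input tape. The horizontal case $u,v \in L_h(i)$ is handled identically using east-edges and the orientation convention that $x \prec y$ when $x$ is to the right, so a path exists only in the appropriate left-to-right (east) direction.

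I expect the main obstacle to be stating and justifying the structural observation cleanly rather than the algorithm itself, which is trivial once the observation is in hand. The subtle point is the monotonicity argument that no path can leave and re-enter a gridline, which relies essentially on the absence of west- and south-edges in a layered grid graph; one must be careful that this is exactly what the definition of layered grid graph guarantees. Once that monotonicity is established, both the correctness of reducing to a sequence of single-edge checks and the $O(\log n)$ space bound follow directly.
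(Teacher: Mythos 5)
Your proposal is correct and follows essentially the same route as the paper: both observe that in a layered grid graph (edges only north and east) a path joining two vertices on the same gridline can never leave that line, so reachability reduces to walking along the line and checking each intermediate edge, which needs only $O(\log n)$ bits for a counter. Your write-up merely makes explicit the coordinate-monotonicity argument that the paper leaves implicit.
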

\begin{proof}
Let us consider that $u,v\in L_v(i)$, for some $i$. As the graph $G$ under consideration is a layered grid graph, if there is a path between $u$ and $v$, then it must pass through all the vertices in $L_v(i)$ that lies in between $u$ and $v$. Hence just by exploring the path starting from $u$ through $L_v(i)$, we can check the reachability and it is easy to see that this can be done in logspace, because the only thing we need to remember is the current vertex in the path. Same argument will also work when $u,v\in L_h(i)$, for some $i$ and this completes the proof.
\end{proof}
Now we argue on the upper bound of the length of any path in the auxiliary graph $H$.

\begin{lemma}
\label{lem:pathlength}
For any two vertices $u,v \in V(H) $, any path between them is of length at most $2k+1$.
\end{lemma}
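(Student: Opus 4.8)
The plan is to exploit the fact that a layered grid graph has edges pointing only north and east, so that every path in $G$ — and therefore every edge of $H$, which by construction certifies a path inside a single block $G_l$ — is \emph{monotone}: along any edge $(a,b)\to(c,d)$ of $H$ we have $c\ge a$ and $d\ge b$. Consequently, along any walk $w_0,w_1,\dots,w_m$ in $H$ both coordinates are non-decreasing. I will turn this monotonicity into a length bound by means of a potential function that counts how far the walk has progressed through the grid of blocks.

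Concretely, for a vertex $(i,j)\in V(H)$ define $\Phi(i,j):=\lfloor ik/n\rfloor+\lfloor jk/n\rfloor$; the first term records which block-column (equivalently, how many vertical gridlines $L_v(\cdot)$ have been passed) the point lies in, and the second does the same for block-rows and horizontal gridlines $L_h(\cdot)$. Since $0\le i,j\le n$, the potential satisfies $0\le\Phi\le 2k$, and monotonicity of each edge immediately gives $\Phi(w_0)\le\Phi(w_1)\le\cdots\le\Phi(w_m)$. The bound $m\le 2k+1$ will follow once I show that $\Phi$ \emph{strictly} increases across every edge of $H$: then the walk visits distinct potential values, so $m\le\Phi(w_m)-\Phi(w_0)\le 2k$.

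The heart of the argument, and the step I expect to be the main obstacle, is proving this strict increase. Every vertex of $H$ lies on a gridline, so inside a block $G_l$ (at block-position $(i',j')$) the two endpoints of an edge lie on its bounding gridlines — the left/right verticals $L_v(i'+1),L_v(i'+2)$ or the bottom/top horizontals $L_h(j'+1),L_h(j'+2)$. Here I will use two facts in tandem: monotonicity rules out "backward" configurations (for instance an edge from the left side to the bottom side would force the source to sit at the bottom-left corner, i.e.\ on the bottom gridline too), and the construction of $E(H)$ \emph{excludes} pairs lying on a common gridline. A short case analysis over which sides the two endpoints occupy — left-to-right and bottom-to-top being exactly the \emph{corner vertex} edges, and the mixed cases such as left-to-top or bottom-to-right being the remaining ones — then shows that in every case the edge either crosses into a strictly higher block-column (raising $\lfloor ik/n\rfloor$) or into a strictly higher block-row (raising $\lfloor jk/n\rfloor$), so $\Phi$ increases by at least one. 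Combining this strict monotonicity with the range of $\Phi$ yields the claimed bound of at most $2k+1$ on the length of any path between two vertices of $H$.
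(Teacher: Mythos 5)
Your proof is correct, but it takes a genuinely different route from the paper. The paper argues via Hall's theorem: it builds a bipartite graph between the vertices of a path in $H$ and the set of $2(k+1)$ gridlines, verifies Hall's condition (essentially, that path vertices cannot pile up on a common gridline, with corner vertices as the delicate case), and concludes that the path has at most $2(k+1)$ vertices, hence length at most $2k+1$. You instead exploit the northeast monotonicity of layered grid graphs directly through the potential $\Phi(i,j)=\lfloor ik/n\rfloor+\lfloor jk/n\rfloor\in\{0,\dots,2k\}$ and show it strictly increases along every edge of $H$; your case analysis does go through, since rule-one edges join vertices on different gridlines of a common block (and monotonicity forces the target onto a strictly higher block-row or block-column), while the corner-vertex edges by definition jump between consecutive gridlines, so in every case $\Phi$ increases by at least one. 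Your approach is more elementary (no appeal to Hall's theorem), and it even yields the slightly sharper bound of $2k$ edges rather than $2k+1$. One small imprecision worth fixing: you describe left-to-right and bottom-to-top edges as being ``exactly the corner vertex edges,'' but ordinary rule-one edges (between distinct vertical, or distinct horizontal, gridlines) can also be of this form; the corner-vertex rule only patches in pairs lying on a \emph{common} gridline at consecutive levels. This mislabeling does not affect your argument, because your case analysis is over geometric configurations of the endpoints, which covers all edges of $H$ regardless of which rule created them.
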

\begin{proof}
Consider any two vertices $u,v\in V(H)$ and a path $u=x_1 x_2\cdots x_r=v$, from $u$ to $v$, denoted as $P$. Now let us consider a bipartite (undirected) graph $K$, where $V(K)=A \cup B$ such that $A=\{x_i|i \in [r]\}$ and $B=\{L_v(i),L_h(j) | i,j \in [k+1]\}$. We add an edge $(a,b)$ in $E(K)$ if and only if $a \in b$, where $b=L_v(i)$ or $b=L_h(i)$, for some $i$. Now observe that by the construction of $H$, each $x_j$ belongs to different $L_v(i)$ or $L_h(i)$ unless $x_j$ is some corner vertex and in that case $x_j \in L_v(i),L_h(i')$, for some $i$ and $i'$. Moreover, if $x_j \in L_v(i)$ (or $L_v(i)$), but is not a corner vertex, then $x_{j+1}$ cannot be a corner vertex on $L_v(i)$ (or $L_h(i)$). As a consequence for every subset $S_A \subseteq A$, its neighbor set $N(S_A):=\{b \in B | \exists a \in S_A \text{ such that }(a,b) \in E(K) \}\subseteq B$ satisfies the condition that $|N(S_A)| \ge |S_A|$. Now we apply the Hall's Theorem \cite{LP86}, which states that a bipartite (undirected) graph $ G=(A \cup B,E) $ has a \emph{matching} if and only if for every $ S \subseteq A $, $ |N(S)| \ge |S| $. Hence there is a matching between $A$ and $B$ and as $|B|\le 2(k+1)$, so is $|A|$. This shows that the path $P$ is of length at most $2k+1$.
\end{proof}

\subsection{Description of the Algorithm}
\label{sec:algo}

We next give a modified version of DFS that starting at a given vertex, visits the set of vertices reachable from that vertex in the graph $H$. At every vertex, the traversal visits the set of outgoing edges from that vertex in an anticlockwise order.

In our algorithm we maintain two arrays of size $k+1$ each, say $A_v$ and $A_h$, one for vertical and the other for horizontal gridlines respectively. For every $i \in [k+1]$, $A_v(i)$ is the {\em topmost} visited vertex in $L_v(i)$ and analogously $A_h(i)$ is the {\em leftmost} visited vertex in $L_h(i)$. This choice is guided by the choice of traversal of our algorithm. More precisely, we cycle through the outgoing edges of a vertex in an anticlockwise order. 

We perform a standard DFS-like procedure, using the tape space to simulate a stack, say $S$. $S$ keeps track of the path taken to the current vertex from the starting vertex. By Lemma \ref{lem:pathlength}, the maximum length of a path in $H$ is at most $2k+1$. Whenever we visit a vertex in a vertical gridline (say $L_v(i)$), we check whether the vertex is lower than the $i$-th entry of $A_v$. If so, we return to the parent vertex and continue with its next child. Otherwise, we update the $i$-th entry of $A_v$ to be the current vertex and proceed forward. Similarly when visit a horizontal gridline (say $L_h(i)$), we check whether the current vertex is to the left of the $i$-th entry of $A_h$. If so, we return to the parent vertex and continue with its next child. Otherwise, we update the $i$-th entry of $A_h$ to be the current vertex and proceed. The reason for doing this is to avoid revisiting the subtree rooted at the node of an already visited vertex. The algorithm is formally defined in Algorithm \ref{algo:LGGR}.


\begin{algorithm}
\Input{The auxiliary graph $ H $, two vertices $u,v \in V(H)$}
\Output{{\yes} if there is a  path from $u$ to $v$; otherwise {\no}}
 
Initialize two arrays $A_v$ and $A_h$ and a stack $S$\;
   Initialize three variables $curr$, $prev$ and $next$ to {\nll}\;
   Push $u$ onto $S$\;
   \While{$S$ is not empty}{	
       $curr \leftarrow$ top element of $S$\;
       $next \leftarrow$ neighbor of $curr$ next to $prev$ in counter-clockwise order\;
		\While{$next \ne \nll$}{ \tcc{cycles through neighbors of curr}
			\If{$next=v$}{ 
        			\Return {\yes}\;
     	  	}
	       \If{$next \in L_v(i)$ for some $i$ and $A_v[i] \prec next$}{  
 	   		     $A_v[i] \leftarrow next $\;
                 {\bf break}\;
       }
       \If{$next \in L_h(i)$ for some $i$ and $A_h[i] \prec next$}{     
                 $A_h[i] \leftarrow next $\;
                 {\bf break}\;
       }
        $prev \leftarrow next$\;
        $next \leftarrow$ neighbor of $curr$ next to $prev$ in counter-clockwise order\; \tcc{{\nll} if no more neighbors are present}
        }

       \eIf{$next = \nll$}{
             remove $curr$ from $S$\;
             $prev \leftarrow curr$\;
       }
       {
		       add $next$ to $S$\;
             $prev \leftarrow \nll$\;
       }
   }
          
   \Return {\no}\;
	    
   \caption{{\tt AlgoLGGR}: Algorithm for Reachability in the Auxiliary Graph $H$}
   \label{algo:LGGR}
\end{algorithm}

%
%
%
%
%


\begin{lemma}
\label{lem:crossing}
Let $G_l$ be some block and let $x$ and $y$ be two vertices on the boundary of $G_l$ such that there is a path from $x$ to $y$ in $G$. Let $x'$ and $y'$ be two other boundary vertices in $G_l$ such that (i) there is a path from $x'$ to $y'$ in $G$ and (ii) $x'$ lies on one segment of the boundary of $G_l$ between vertices $x$ and $y$ and $y'$ lies on the other segment of the boundary. Then there is a path in $G$ from $x$ to $y'$ and from $x'$ to $y$. Hence, if $(x,y)$ and $(x',y')$ are present in $E(H)$ then so are $(x,y')$ and $(x',y)$.
\end{lemma}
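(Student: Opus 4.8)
The plan is to establish the two reachability claims by a planarity crossing argument and then to read off the statement about $E(H)$ directly from the construction of $H$. The operative hypothesis I will use is that $(x,y)$ and $(x',y')$ lie in $E(H)$: by the construction of $H$ this certifies a path $P$ from $x$ to $y$ and a path $Q$ from $x'$ to $y'$, both lying entirely inside the block $G_l$, and I work with these confined paths (the conclusion then yields paths in $G$ a fortiori). Since $G$ is a layered grid graph, every edge is directed north or east, so $P$ and $Q$ are monotone, nondecreasing in both coordinates, and in particular simple.

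First I would fix the picture. The boundary $\partial G_l$ of the block is a simple closed curve $C$ bounding a closed disk $D$ with $P,Q \subseteq D$. Hypothesis (ii) says that $x$ and $y$ split $C$ into two arcs with $x'$ interior to one arc and $y'$ interior to the other; equivalently, $x,x',y,y'$ occur in this cyclic order around $C$. This interleaving is the only place condition (ii) is used, and it is precisely what forces $P$ and $Q$ to meet.

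The crossing step is where the real work lies, and it is the step I expect to be the main obstacle. The simple arc $P$ joins the two boundary points $x,y$ of the disk $D$, hence splits $D$ into two regions, one bounded by $P$ together with the arc of $C$ through $x'$ and the other by $P$ together with the arc through $y'$; thus $x'$ and $y'$ lie on the boundaries of different regions, so any curve from $x'$ to $y'$ inside $D$, in particular $Q$, must meet $P$. Because $G$ is a grid graph, two edges can intersect only at a lattice point, i.e.\ at a vertex, so $P$ and $Q$ in fact share a vertex $z \in V(G)$. Equivalently one may invoke the standard planarity fact that, for four terminals appearing in the cyclic order $x,x',y,y'$ on a face boundary, every path from $x$ to $y$ and every path from $x'$ to $y'$ must intersect. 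The care needed is in the degenerate situations where $P$ or $Q$ partially runs along $C$ (so that $P$ does not cleanly separate $D$) and where a terminal coincides with another; the former I would settle by arguing combinatorially with the grid vertices rather than with the continuous curve, and the latter reduces the claim to a triviality.

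Finally I would splice at $z$. Writing $P = P_1 P_2$ with $P_1$ running from $x$ to $z$ and $P_2$ from $z$ to $y$, and $Q = Q_1 Q_2$ with $Q_1$ from $x'$ to $z$ and $Q_2$ from $z$ to $y'$, the concatenations $P_1 Q_2$ and $Q_1 P_2$ are paths from $x$ to $y'$ and from $x'$ to $y$ respectively; each is a legal directed walk, since all four pieces are monotone north/east paths joined only at the common vertex $z$. Both spliced paths lie in $G_l$, so by the definition of $E(H)$ the presence of $(x,y)$ and $(x',y')$ forces the presence of $(x,y')$ and $(x',y)$, which is exactly the assertion of the lemma.
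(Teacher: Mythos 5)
Your proof is correct and follows essentially the same route as the paper's: both arguments confine the two paths to the block $G_l$, invoke planarity to force an intersection of the paths at a vertex of the grid, and splice the paths at that intersection vertex to produce the paths from $x$ to $y'$ and from $x'$ to $y$. The paper's own proof is just a three-sentence version of this; your extra detail on the Jordan-curve separation step and on why a grid embedding forces the intersection to be at a lattice point (hence a vertex) merely fills in what the paper leaves implicit.
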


\begin{proof}
Since $G$ is a layered grid graph hence the paths $x$ to $y$ and $x'$ to $y'$ must lie inside $G_l$. Also because of planarity, the paths must intersect at some vertex in $G_l$. Now using this point of intersection, we can easily show the existence of paths from $x$ to $y'$ and from $x'$ to $y$.
\end{proof}
 
 Lemma \ref{lem:reach} will prove the correctness of Algorithm \ref{algo:LGGR}.

 \begin{lemma}
 \label{lem:reach}
Let $u$ and $v$ be two vertices in $H$. Then starting at $u$ Algorithm \ref{algo:LGGR} visits $v$ if and only if $v$ is reachable from $u$.
 \end{lemma}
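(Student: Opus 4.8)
The statement is an equivalence, so I would prove the two directions separately, expecting the forward (soundness) direction to be routine and the converse (completeness) direction to carry all the difficulty. For soundness I would maintain the loop invariant that the stack $S$, read bottom-to-top, spells a directed walk in $H$ from $u$ to the current vertex $curr$: a vertex is pushed only when it occurs as $next$, and $next$ is by construction an out-neighbour of $curr$ in $H$ (so $(curr,next)\in E(H)$). This invariant is preserved across each outer iteration, whether we push a new $next$ or pop $curr$. Hence whenever the algorithm returns \yes---which happens exactly when an examined neighbour $next$ equals $v$---we obtain a walk $u\rightsquigarrow curr\to v$, so $v$ is reachable from $u$ in $H$. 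Thus ``visits $v$'' implies ``$v$ reachable from $u$''.

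The substance is the converse. My plan is an extremal-counterexample argument. Suppose some reachable vertex is never visited; fix a path $P:u=p_0\to p_1\to\cdots\to p_m=v$ in $H$ witnessing reachability of such a vertex, and let $p_j$ be the \emph{last} vertex of $P$ that the algorithm visits ($p_0=u$ is visited, so $p_j$ exists). Then $p_{j+1}$ is an out-neighbour of the visited vertex $p_j$ yet is never pushed. While $p_j$ sat on top of $S$ the inner loop cycled through all out-neighbours of $p_j$ in counter-clockwise order and hence examined $p_{j+1}$; since $p_{j+1}$ was not accepted, neither break-condition held, i.e.\ (taking the vertical case) $p_{j+1}\in L_v(i)$ with $p_{j+1}\preceq A_v[i]$ at that moment, and since $p_{j+1}$ is unvisited in fact $p_{j+1}\prec A_v[i]$. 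Writing $z:=A_v[i]$, there is therefore a \emph{visited} vertex $z$ lying strictly above $p_{j+1}$ on the same vertical gridline $L_v(i)$. (The case where $p_{j+1}$ is a corner vertex, also lying on some $L_h(i')$, is handled by running the same argument on whichever of its two records blocked it; and $p_{j+1}\ne v$, since otherwise the algorithm returns \yes\ and we are done.)

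The crux is to show that $v$ is reachable from the already-visited vertex $z$, for then a minimality hypothesis on the counterexample---say, choosing $v$ to minimise the length of the witnessing suffix $p_j\rightsquigarrow v$---is contradicted, since $z$ is visited and yields a strictly shorter witness to an unvisited-reachable vertex. To produce the path $z\rightsquigarrow v$ I would invoke planarity through Lemma~\ref{lem:crossing}. The edge $p_j\to p_{j+1}$ of $H$ corresponds to a path inside a single block $G_l$ on whose boundary both endpoints lie, and the block structure lets me compare the continuation $p_{j+1}\rightsquigarrow v$ against a boundary-to-boundary segment realising how $z$ was reached on $L_v(i)$; because $z$ lies above $p_{j+1}$ on the shared gridline while $P$ continues on the other side, the two paths interleave on the boundary of a common block and Lemma~\ref{lem:crossing} supplies the rerouted path from $z$ onward to $v$. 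The main obstacle, and where I expect the real effort, is exactly this rerouting: pinning down the single block in which the crossing is forced, verifying that the counter-clockwise traversal is precisely what makes ``topmost''/``leftmost'' the correct extreme for the interleaving hypothesis of Lemma~\ref{lem:crossing}, and handling corner vertices (which sit on two gridlines and may be blocked by either record). Finiteness of the traversal, needed for the DFS to be well-defined and to terminate, follows from Lemma~\ref{lem:pathlength}, which caps the stack depth at $2k+1$.
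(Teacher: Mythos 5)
Your forward direction is fine and matches the paper's (the paper dispenses with it in one line). The genuine gap is in the converse, precisely at what you yourself flag as the crux: the claim that $v$ is reachable from the blocking vertex $z$ is false in general, and no application of Lemma~\ref{lem:crossing} can establish it. In a layered grid graph every edge points north or east, so nothing strictly below $z$ is reachable from $z$; but nothing in your configuration prevents the continuation $p_{j+1}\rightsquigarrow v$ from staying strictly below the height of $z$ (for instance $v$ could lie due east of $p_{j+1}$ at the same height, while $z$ sits higher up on $L_v(i)$). Moreover, the two paths actually available to you are a path \emph{into} $z$ (the one by which the algorithm reached it) and a path \emph{out of} $p_{j+1}$; Lemma~\ref{lem:crossing} applied to a crossing pair yields paths from the \emph{starting points} of the two paths --- i.e., from an ancestor of $z$ to $v$, and from $p_{j+1}$ to $z$ --- never a path that starts at $z$. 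So the rerouting you describe cannot be carried out, and with it your contradiction collapses. (Even granting a path $z\rightsquigarrow v$, the minimality bookkeeping would still be incomplete: the last visited vertex on that path need not leave a suffix strictly shorter than the minimal one $m-j$.)

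The paper derives the contradiction from a different quantity: it never tries to reach $v$ from $z$, but instead shows that the blocking configuration itself cannot occur. It inducts on the distance $d$ from $u$, assumes $x\in B_d(u)$ is unvisited, and takes $w$ to be the \emph{first} in-neighbour of $x$ in $B_{d-1}(u)$ pushed onto $S$ (say $w\in L_h(j)$ and $x$ on the right boundary $L_v(i)$ of the block $G_l$). Writing $z=A_v(i)$ at the moment $w$ examines $x$, it considers the DFS-tree ancestor $w'$ of $z$ lying on $L_h(j)$ and splits into cases by position. If $w'$ is to the left of $w$, the record $A_h(j)$ would have prevented $w$ from ever being pushed after $w'$ was visited, so $w$ precedes $w'$; since $w'$ is not reachable from $w$, the vertex $z$ is then visited only after $w$'s subtree is finished, contradicting $z=A_v(i)$ at the time of the examination. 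If $w'$ is to the right of $w$, then taking $w''$ to be the point where the tree path from $w'$ to $z$ crosses $L_h(j+1)$, Lemma~\ref{lem:crossing} applied to $w,x,w',w''$ yields the edge $(w',x)\in E(H)$, and the counter-clockwise order at $w'$ (lower neighbours on $L_v(i)$ are examined before higher ones and before those on $L_h(j+1)$) forces $x$ to be visited before $z$ --- again a contradiction. This analysis of how $z$ itself was reached, combined with the traversal order, is the idea your proposal is missing.
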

 \begin{proof}
It is easy to see that every vertex visited by the algorithm is reachable from $u$ since the algorithm proceeds along the edges of $H$. 

By induction on the shortest path length to a vertex, we will show that if a vertex is reachable from $u$ then the algorithm visits that vertex. Let $B_d(u)$ be the set of vertices reachable from $u$ that are at a distance $d$ from $u$. Assume that the algorithm visits every vertex in $B_{d-1}(u)$. Let $x$ be a vertex in $B_d(u)$. 
Without loss of generality assume that $x$ is in $L_v(i,j)$ for some $i$ and $j$. A similar argument can be given if $x$ belongs to a horizontal gridline. Further, let $x$ lie on the right boundary of a block $G_l$. Let $W_x = \{w \in B_{d-1}(u)| (w,x) \in E(H)\}$. Note that by the definition of $H$, all vertices in $W_x$ lie on the bottom boundary or on the left boundary of $G_l$.

Suppose the algorithm does not visit $x$. 
Since $x$ is reachable from $u$ via a path of length $d$, therefore $W_x$ is non empty. Let $w$ be the first vertex added to $W_x$ by the algorithm. Then $w$ is either in $L_h(j)$, or in $L_v(i-1)$. Without loss of generality assume $w$ is in $L_h(j)$. Let $z$ be the value in $A_v(i)$ at this stage of the algorithm (that is when $w$ is the current vertex). Since $x$ is not visited hence $x \prec z$. Also this implies that $z$ was visited by the algorithm at an earlier stage of the algorithm. Let $w'$ be the ancestor of $z$ in the DFS tree such that $w'$ is in $L_h(j)$. There must exist such a vertex because $z$ is above the $j$-th horizontal gridline, that is $L_h(j)$.

Suppose if $w'$ lies to the left of $w$ then by the description of the algorithm, $w$ is visited before $w'$. Hence $x$ is visited before $z$. On the other hand, suppose if $w'$ lies to the right of $w$. Clearly $w'$ cannot lie to the right of vertical gridline $L_v(i)$ since $z$ is reachable from $w'$ and $z $ is in $L_v(i)$. Let $w''$ be the vertex in $L_h(j+1)$ such that $w''$ lies in the tree path between $w'$ and $z$ (See Figure \ref{fig:reach}). Observe that all four vertices lie on the boundary of $G_l$. Now by applying Lemma \ref{lem:crossing} to the four vertices $w$, $x$, $w'$ and $w''$ we conclude that there exists a path from $w'$ to $x$ as well. Since $x \prec z$, $x$ must have been visited before $z$ from the vertex $w'$. In both cases, we see that   $z$ cannot be $A_v(i)$ when $w$ is the current vertex. Since $z$ was an arbitrary vertex such that $x \prec z$, the lemma follows.
\end{proof}

 \begin{center}
  \begin{figure}
\centering
\begin{tikzpicture}[scale=1,shorten >=.25mm,>=latex]



 \draw[-,>=latex,thick] (0,0)--(4,0);
 \draw[-,>=latex,thick] (0,0)--(0,4);
 \draw[-,>=latex,thick] (0,3)--(4,3);
 \draw[-,>=latex,thick] (3,0)--(3,4);
 \draw node at (4.6,0) {$L_h(j)$};
 \draw node at (4.8,3) {$L_h(j+1)$};

 \node at (0,-0.4) {$L_v(i-1)$};
 \node at (3,-0.4) {$L_v(i)$};

 \draw[->,dashed,thick] (1,0) -- (3,2);
  \node at (3.2,2.2) {$x$};
  \draw[->,dashed,thick] (2,0)-- (3,3.8);
  \node at (3.2,3.8) {$z$};
  \node at (1,-0.2) {$w$};
  \node at (2,-0.2) {$w'$};
  \node at (2.6,3.2) {$w''$};
  \node at (1.5,1.5) {$G_l$};


\end{tikzpicture}
\caption{Crossing between two paths}
   \label{fig:reach}
\end{figure}
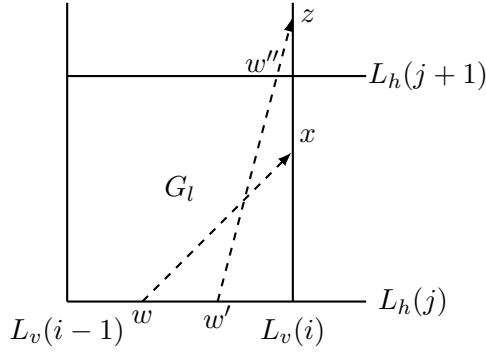
 \end{center}
 
 We next show Lemma \ref{lem:visitonce} which will help us to achieve a polynomial bound on the running time of Algorithm \ref{algo:LGGR}.
 
 \begin{lemma}
 \label{lem:visitonce}
 Every vertex in the graph $H$ is added to the set $S$ at most once in Algorithm \ref{algo:LGGR}.
 \end{lemma}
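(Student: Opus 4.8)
The plan is to show that once a vertex $x$ is removed from the stack $S$, it is never pushed again, which combined with the fact that each vertex is pushed only when it is freshly reached gives the claimed bound. The key mechanism to exploit is the monotone arrays $A_v$ and $A_h$: the crucial invariant is that, for each gridline, the recorded entry $A_v[i]$ (resp. $A_h[i]$) only moves \emph{monotonically} upward (resp. leftward) as the algorithm proceeds, since the code only ever overwrites the entry with a vertex $next$ satisfying $A_v[i] \prec next$ (and $\prec$ orders vertices from top/left downward). So the plan is to first state this monotonicity of $A_v$ and $A_h$ as an auxiliary invariant.

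Given monotonicity, the core of the argument is the following claim: a vertex $x$ lying on some gridline $L_v(i)$ (symmetrically $L_h(i)$) is pushed onto $S$ at most once, because $x$ is pushed only when the guard $A_v[i] \prec x$ holds, and at that moment the entry is immediately updated to $x$; thereafter, by monotonicity, $A_v[i]$ is always at least as high as $x$, so the guard $A_v[i] \prec x$ can never be satisfied again for the same $x$. Hence the branch that pushes a gridline vertex fires at most once per vertex. First I would make precise that every vertex of $H$ lies on at least one gridline (indeed $V(H)$ consists exactly of points with a coordinate a multiple of $n/k$), so this covers all vertices; corner vertices lie on two gridlines but the same guard-then-update logic blocks re-pushing on either line. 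I expect the main obstacle to be handling the corner/boundary vertices cleanly, since such a vertex belongs to both a vertical and a horizontal gridline, and I must check that the bookkeeping in either branch still prevents a second push — this is where Lemma~\ref{lem:crossing} and the careful case analysis in the proof of Lemma~\ref{lem:reach} reassure us that the traversal order is consistent.

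The final step is to connect ``pushed at most once via its guard'' to ``added to $S$ at most once'' as stated. Since the only place the algorithm executes \texttt{add $next$ to $S$} is immediately after one of the two guarded updates to $A_v$ or $A_h$ fires a \textbf{break}, every push is witnessed by a guard-firing on the gridline containing $next$; by the monotonicity claim that guard fires at most once for any fixed vertex, so the vertex is pushed at most once. I would close by remarking that this immediately yields the polynomial running-time bound promised in the text, since the total number of stack operations is then $O(|V(H)|)$ times the per-step neighbor-scanning cost.
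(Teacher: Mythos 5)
Your main mechanism (the guards $A_v[i]\prec next$, $A_h[i]\prec next$ are checked immediately before every push, the entry is overwritten at the moment of the push, and the entries move monotonically upward/leftward thereafter) is exactly the paper's argument, and it correctly handles every vertex that lies on a \emph{single} gridline. However, the corner-vertex case — which you yourself flag as the main obstacle — is a genuine gap in your write-up, and your proposed resolution does not close it. Your final step says: every push is witnessed by a guard firing on ``the gridline containing $next$,'' and each guard fires at most once per vertex, hence at most one push. For a corner vertex $u\in L_v(i)\cap L_h(j)$ there are \emph{two} guards, so this reasoning only yields ``pushed at most twice,'' not ``at most once.'' The claim that ``the same guard-then-update logic blocks re-pushing on either line'' is false under the literal pseudocode: the vertical check is performed first and executes a \textbf{break}, so when $u$ is pushed via the vertical guard, $A_h[j]$ is \emph{not} updated. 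Afterwards everything explored from $u$ is north/east of $u$, so $A_h[j]$ can remain {\nll} or strictly to the right of $u$; when $u$ is later offered as a neighbor of a different vertex, the vertical guard fails (since $A_v[i]\succeq u$) but the horizontal guard $A_h[j]\prec u$ fires, and $u$ is pushed a second time. One can realize this concretely: take $u$ a corner of the central block, one in-neighbor of $u$ on the block's bottom boundary and another on its left boundary, both reachable from the start vertex along disjoint branches of the DFS; the first encounter pushes $u$ via $A_v$, the second via $A_h$.

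The missing ingredient is precisely the convention the paper states in its proof: when a corner vertex is pushed, \emph{both} $A_v(i)$ and $A_h(j)$ are updated (not merely the one whose guard fired). With that stipulation the second encounter fails both checks and the lemma holds. Your appeal to Lemma~\ref{lem:crossing} and the case analysis of Lemma~\ref{lem:reach} cannot substitute for this: those results concern which vertices are reachable and how paths cross inside a block, not the stack discipline, and they impose no constraint preventing the second guard from firing. So the proof needs to be repaired either by adopting the ``update both arrays'' convention for corner vertices (as the paper does) or by explicitly weakening the lemma to ``at most twice'' (which would still suffice for the polynomial time bound, but is not the statement as given).
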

 \begin{proof}
Observe that a vertex $u$ in $L_v(i)$ is added to $S$ only if $A_v(i) \prec u$, and once $u$ is added, $A_v(i)$ is set to $u$. Also during subsequent stages of the algorithm, if $A_v(i)$ is set to $v$, then $u \prec v$. Hence $u \prec A_v(i)$. Therefore, $u$ cannot be added to $S$ again. 

We give a similar argument if $u$ is in $L_h(i)$. Suppose if $u$ is in $L_v(i)$ for some $i$ and $L_h(j)$ for some $j$, then we add $u$ only once to $S$. This check is done in Line 16 of Algorithm \ref{algo:LGGR}. However we update both $A_v(i)$ and $A_h(j)$.
 \end{proof}

Algorithm \ref{algo:LGGR} does not explicitly compute and store the graph $H$. Whenever it is queried with a pair of vertices to check if it forms an edge, it recursively runs a reachability query in the corresponding block and produces an answer. The base case is when a query is made to a grid of size $k \times k$. For the base case, we run a standard DFS procedure on the $k \times k$ size graph.

In every iteration of the {\em outer while} loop (Lines 4 -- 29) of Algorithm \ref{algo:LGGR}, either an element is added or an element is removed from $S$. Hence by Lemma \ref{lem:visitonce} the loop iterates at most $4nk$ times. The {\em inner while} loop (Lines 7 -- 21), cycles through all the neighbors of a vertex and hence iterates for at most $2n/k$ times. Each iteration of the inner while loop makes a constant number of calls to check the presence of an edge in a $n/k \times n/k$ sized grid. Let $\calT(n)$ and $\calS(n)$ be the time and space required to decide reachability in a layered grid graph of size $n \times n$ respectively. Then,
\[
\calT(n) = 
 \begin{cases}
 8n^2 (\calT(n/k) + O(1)) & \text{if } n > k\\
 O(k^2) & \text{otherwise}.
 \end{cases}
\]
Hence,
\[
\calT(n) = O\left(n^{3\frac{\log n}{\log k}}\right).
\]

Since we do not store any query made to the smaller grids, therefore the space required to check the presence of an edge in $H$ can be reused. $A_v$ and $A_h$ are arrays of size $k+1$ each. By Lemma \ref{lem:pathlength}, the number of elements in $S$ at any stage of the algorithm is bounded by $2k+1$. Therefore,
\[
\calS(n) = 
 \begin{cases}
 \calS(n/k) + O(k \log n) & \text{if } n > k\\
 O(k^2) & \text{otherwise}.
 \end{cases}
\]
Hence,
\[
\calS(n) = O\left(\frac{k}{\log k} \log^2 n+k^2\right).
\]

Now given any constant $\epsilon > 0$, if we set $k = n^{\epsilon/2}$, then we get $\calT(n) = O(n^{6/\epsilon})$ and $\calS(n) = O(n^{\epsilon})$. This proves Theorem \ref{thm:LGGR}.

\comment{
Note that the maximum length of a path in $H$ is at most $2k$. Hence when dealing with layered grid graphs, we can easily get rid of the stack and use tape space instead. However for the sake of generality and with the hope that this idea is useful elsewhere as well, we use a stack to simulate the DFS.

Now we would like to mention that instead of deterministic Turing machines if we consider deterministic auxiliary pushdown machine, it will not add any power to the class {\ASC}. Note that deterministic auxiliary pushdown machine is basically a deterministic Turing machines along with a extra ``stack'' space of infinite length. Similar type of class in case of log-space computation is known as {\tt LOGDCFL} and is an important complexity class in the domain of space bounded computation. However this is not required in proving our main theorem of this paper, we provide the details for the interested reader to better understand the class {\ASC}.
\begin{definition}[Complexity Class {\tt Pushdown-ASC} or {\PASC}]
 For a fixed $\epsilon > 0$, {\ePASC} denotes the set of languages decided by the deterministic auxiliary pushdown machine simultaneously in $n^{O(1)}$ time and $n^{\epsilon}$ space. Now the complexity class \emph{Pushdown-{\ASC}} or {\PASC} is defined as {\PASC}$:=\bigcap_{\epsilon > 0}${\ePASC}.
\end{definition}
Cook showed the following theorem which relates the class {\PASC} and {\ASC}.
\begin{theorem}[\cite{logdcflinsc2}]
 \label{thm:cook}
 If a language $L$ is accepted by a deterministic auxiliary pushdown machine simultaneously in $t(n)$ time and $s(n)\ge \log n$ space, then $L$ is also accepted by a deterministic Turing machine simultaneously within $O((t(n))^6)$ time and $O(s(n)+\log t(n)) \log t(n)$ space.
\end{theorem}
The above theorem directly leads to the following consequence.

\begin{proposition}
 \label{prop:PDA}
 $\PASC = \ASC$.
 \end{proposition}
}

\section*{Acknowledgement}
We thank N. V. Vinodchandran for his helpful suggestions and comments. The first author would like to acknowledge the support of Research I Foundation.

\bibliographystyle{alpha}
\bibliography{references}



\end{document}